\newcounter{constant}
\newcommand{\newconstant}[1]{%
  \begingroup
    \refstepcounter{constant}%
    \label{#1}%
  \endgroup
}
\newcommand{\useconstant}[1]{C_{\ref{#1}}}
\def\namedlabel#1#2{\begingroup
    #2%
    \def\@currentlabel{#2}%
    \phantomsection\label{#1}\endgroup
}
\renewcommand*{\backref}[1]{}
\renewcommand*{\backrefalt}[4]{%
    \ifcase #1 %
    \or     #2%
    \else   #2%
    \fi
}
\DeclareSymbolFont{rmlargesymbols}{OMX}{mdbch}{m}{n}
\DeclareMathSymbol{\rmintop}{\mathop}{rmlargesymbols}{82}
  \DeclareFontShape{T1}{lmr}{m}{scit}{<->ssub*lmr/m/scsl}{}%
\let\originalleft\left
\let\originalright\right
\renewcommand{\left}{\mathopen{}\mathclose\bgroup\originalleft}
\renewcommand{\right}{\aftergroup\egroup\originalright}
\definecolor{codegreen}{rgb}{0,0.6,0}
\definecolor{codegray}{rgb}{0.5,0.5,0.5}
\definecolor{codepurple}{rgb}{0.58,0,0.82}
\definecolor{backcolour}{rgb}{0.95,0.95,0.92}
\lstdefinestyle{mystyle}{
    backgroundcolor=\color{backcolour},   
    commentstyle=\color{codegreen},
    keywordstyle=\color{magenta},
    numberstyle=\tiny\color{codegray},
    stringstyle=\color{codepurple},
    basicstyle=\footnotesize,
    breakatwhitespace=false,         
    breaklines=true,                 
    captionpos=b,                    
    keepspaces=true,                 
    numbers=left,                    
    numbersep=5pt,                  
    showspaces=false,                
    showstringspaces=false,
    showtabs=false,                  
    tabsize=2
}
\newcommand{\AlgoResetCount}{\renewcommand{\@ResetCounterIfNeeded}{\setcounter{AlgoLine}{0}}}
\newcommand{\AlgoNoResetCount}{\renewcommand{\@ResetCounterIfNeeded}{}}
\newcounter{AlgoSavedLineCount}
\tikzset{every picture/.style={line width=0.75pt}} 
\newcommand{\bR}{\mathbb{R}}
\newcommand{\bZ}{\mathbb{Z}}
\newcommand{\cA}{\mathcal{A}}
\newcommand{\cD}{\mathcal{D}}
\newcommand{\cH}{\mathcal{H}}
\newcommand{\cO}{\mathcal{O}}
\newcommand{\cX}{\mathcal{X}}
\newcommand{\cY}{\mathcal{Y}}
\newcommand{\cZ}{\mathcal{Z}}
\newcommand{\Bern}{\mathrm{Bern}}
\newcommand{\TV}{d_{\mathrm{TV}}}
\newcommand{\clip}{\mathtt{clip}}
\newcommand{\paren}[1]{\left( #1 \right)}
\newcommand{\brk}[1]{\left[ #1 \right]}
\newcommand{\brc}[1]{\left\{ #1 \right\}}
\newcommand{\pr}[2]{
    \ifthenelse{\equal{#1}{}}{
        \mathbb{P}\left[ #2 \right]
    }
    {
        \underset{#1}{\mathbb{P}}\left[ #2 \right]
    }
}
\newcommand{\ex}[2]{
    \ifthenelse{\equal{#1}{}}{
        \mathbb{E}\left[ #2 \right]
    }
    {
        \underset{#1}{\mathbb{E}}\left[ #2 \right]
    }
}
\newcommand{\var}[2]{
    \ifthenelse{\equal{#1}{}}{
        \mathrm{Var}\left( #2 \right)
    }
    {
        \underset{#1}{\mathrm{Var}}\left( #2 \right)
    }
}
\newcommand{\llnorm}[1]{\left\| #1 \right\|_2}
\newcommand{\mnorm}[2]{\left\| #2 \right\|_{#1}}
\newcommand{\abs}[1]{\left| #1 \right|}
\newcommand{\eps}{\varepsilon}
\definecolor{shadecolor}{rgb}{0.83, 0.83, 0.83}
\theoremstyle{plain}
\newtheorem{thm}{Theorem}[section]
\newtheorem{theorem}[thm]{Theorem}
\newtheorem{fact}[thm]{Fact}
\newtheorem{proposition}[thm]{Proposition}
\newtheorem{corollary}[thm]{Corollary}
\newtheorem{lemma}[thm]{Lemma}
\theoremstyle{definition}
\newtheorem{definition}[thm]{Definition}
\newtheorem{question}[thm]{Question}
\newtheorem{remark}[thm]{Remark}
\title{Not All Learnable Distribution Classes are Privately Learnable\thanks{Authors are listed in alphabetical order.}}
\author{
    Mark Bun\thanks{{\tt mbun@bu.edu}. Boston University.}
\and
    Gautam Kamath\thanks{{\tt g@csail.mit.edu}. Cheriton School of Computer Science, University of Waterloo and Vector Institute.}
\and
    Argyris Mouzakis\thanks{{\tt amouzaki@uwaterloo.ca}. Cheriton School of Computer Science, University of Waterloo.}
\and
    Vikrant Singhal\thanks{{\tt vikrant@seas.harvard.edu}. John A. Paulson School of Engineering and Applied Sciences, Harvard University. Work done while being a postdoc at the Cheriton School of Computer Science, University of Waterloo.}
}
\begin{document}

\maketitle
\begin{abstract}
We give an example of a class of distributions that is learnable up to constant error in total variation distance with a finite number of samples, but not learnable under $\paren{\eps, \delta}$-differential privacy with the same target error.
This weakly refutes a conjecture of Ashtiani.
\end{abstract}

\newpage

\tableofcontents

\newpage

\section{Introduction}
\label{sec:intro}

Given samples from a distribution $\cD$ belonging to some class of distributions $\cH$, can we output a distribution $\cD'$ that is close to $\cD$ in total variation distance?
This problem, known as \emph{distribution learning} or \emph{density estimation}, has enjoyed significant study by a number of communities, including Computer Science, Statistics, and Information Theory (see, e.g.,~\cite{DevroyeL01, KearnsMRRSS94, DaskalakisDS12b, AshtianiBHLMP20}). 

A recent line of work studies distribution learning under \emph{differential privacy}~\cite{DworkMNS06}, giving sample complexity bounds for several classes of interest.
However, many of these algorithms are ad hoc, exploiting idiosyncrasies of the class of interest (see, e.g.,~\cite{KarwaV18, KamathLSU19}). 
Recent efforts have succeeded in weakening assumptions and designing increasingly general learning algorithms and frameworks (see, e.g.,~\cite{LiuKO22, KamathMSSU22, AshtianiL22, KothariMV22, AfzaliAL23}).
It is natural to wonder how far this agenda can be pushed -- what are the limits of private learning?
Specifically, we consider the following question:

\begin{question}
\label{q:learnability}
Is every learnable class of distributions $\cH$ also learnable under the constraint of $\paren{\eps, \delta}$-differential privacy?
\end{question}

The answer is known to be ``no'' under the stronger constraint of $(\eps, 0)$-DP (i.e., \emph{pure} DP).
Bun, Kamath, Steinke, and Wu~\cite{BunKSW19} showed that the covering and packing numbers of a distribution class $\cH$ give sample complexity upper and lower bounds, respectively, for learning the class $\cH$.
Consequently, this immediately gives separations between learning and $(\eps, 0)$-DP learning.\footnote{The simplest natural example is the class of univariate unit-variance Gaussians with unbounded mean.}
However, they do not prove any sample complexity lower bounds for $(\eps, \delta)$-DP (i.e., \emph{approximate} DP) learning, leaving open the possibility that every learnable distribution class is privately learnable.

On the related task of PAC learning of \emph{functions}, a rich line of work shows that there exist strong separations between non-private learning and private learning, under both $(\eps, 0)$-DP~\cite{BeimelBKN14, FeldmanX15} and $(\eps, \delta)$-DP~\cite{BunNSV15, AlonLMM19, BunLM20}. 
In particular, for approximate DP, learnability is characterized by the Littlestone dimension, rather than the VC dimension as in the non-private setting.
However, given substantial differences in the setting, it is unclear whether these separations have any implications for private distribution learning.

At a July 2022 workshop at the Fields Institute, Ashtiani explicitly conjectured an affirmative answer to Question~\ref{q:learnability}: every learnable class of distributions is privately learnable~\cite{Ashtiani22}.
Indeed, as mentioned before, the community (including contributions by Ashtiani, as well as others) has designed increasingly generic algorithms for private distribution learning~\cite{AshtianiL22, TsfadiaCKMS22, AfzaliAL23}, often depending only on a non-private learner in a black-box manner.

We weakly refute Ashtiani's conjecture.
We give an explicit class of distributions which is learnable up to constant error from a constant number of samples, but is not privately learnable with the same error guarantee with any finite number of samples.\footnote{A strong refutation would require showing that the class considered is learnable with a finite number of samples for \emph{any} target error, which we do not show in this paper.
This is left as an interesting open question.}

\begin{theorem}[Informal version of Theorem~\ref{thm:main_theorem_formal}]
\label{thm:main_thm_informal}
There exists a class of distributions $\cH$ such that, for an absolute constant $c$:
\begin{enumerate}
    \item There exists an algorithm which, given $\cO\paren{1}$ samples from any distribution $\cD \in \cH$, outputs a $\widehat{\cD} \in \cH$ such that $\pr{}{\TV\paren{\widehat{\cD}, \cD} \le c} \geq 0.9$.
    \item Any $\paren{\eps, \delta}$-DP mechanism that attains the same accuracy guarantee needs an infinite number of samples.
\end{enumerate}
\end{theorem}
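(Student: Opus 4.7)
The plan is to construct an explicit family $\cH$ of distributions, establish non-private learnability by direct inspection of the samples, and rule out finite-sample approximate-DP learning via a reduction to a problem whose $(\eps, \delta)$-DP sample complexity on an infinite domain is already known to be infinite. The most plausible hard problem to target is the interior-point problem on $\mathbb{N}$, or, equivalently, private PAC learning for a class of infinite Littlestone dimension such as thresholds on an infinite linear order \cite{BunNSV15, AlonLMM19, BunLM20}.

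I would take $\cH = \{\cD_t : t \in \cT\}$ with $\cT$ infinite, and arrange two properties. First, the members are pairwise $2c$-separated in total variation, so that proper learning to accuracy $c$ forces the output to identify the index $t$. Second, any single draw from $\cD_t$ essentially reveals $t$ non-privately, so that the upper bound is immediate: take $O(1)$ samples, read off $t$ with confidence at least $0.9$, and return $\cD_t$ itself, giving TV error $0$.

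The lower bound is the core of the argument. Given a hypothetical $(\eps, \delta)$-DP learner $M$ with uniform accuracy from $n$ samples, I would use the TV separation of $\cH$ to post-process $M$'s output into an $(\eps, \delta)$-DP identifier $\widehat t$ of the true index from $n$ samples of $\cD_t$. The next step is to exhibit an encoding of arbitrary infinite-domain interior-point instances into datasets drawn from some $\cD_t \in \cH$, so that $\widehat t$ solves the encoded hard instance; combining with the known impossibility result forces $n = \infty$.

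The main obstacle is the \emph{construction} of $\cH$. The class must jointly satisfy (i) single-sample identifiability of $t$ for the upper bound, and (ii) robustness against every generic $(\eps, \delta)$-DP primitive that might privately identify $t$. The simplest candidates fail on (ii): point masses on $\mathbb{N}$ are privately learnable via stable histograms, and univariate Gaussians with unbounded mean are privately learnable via the Karwa--Vadhan propose-test-release pipeline. The construction therefore has to spread the mass of each $\cD_t$ so that no single domain element is a high-probability witness for $t$, while still ensuring each sample deterministically betrays $t$ non-privately; and it must simultaneously arrange that no low-sensitivity statistic of the sample depends on $t$ in a recoverable way. Engineering these properties so that the residual private estimation problem reduces cleanly to the infinite-domain interior-point bottleneck is the crux of the proof; the reduction itself is then relatively routine.
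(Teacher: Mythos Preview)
Your plan differs substantially from the paper's, and as written it has a genuine gap rather than just a different route.

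The gap is in the construction. You require that ``any single draw from $\cD_t$ essentially reveals $t$'' while simultaneously ``no low-sensitivity statistic of the sample depends on $t$ in a recoverable way.'' These two conditions are in direct conflict. If every sample $x$ deterministically determines $t$ via some map $f$, then on input $X=(X_1,\dots,X_n)\sim\cD_t^{\otimes n}$ we have $f(X_1)=\cdots=f(X_n)=t$ almost surely. Releasing the common value of $f$ on a dataset where all (or all but one) points agree is exactly the regime in which stability-based histograms and propose--test--release succeed under $(\eps,\delta)$-DP with $O(1/\eps)$ samples, for arbitrary (even infinite) label domains. So any class meeting your stated requirements is privately learnable for the same reason point masses are, and no reduction to interior point or thresholds can get off the ground. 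You identify this obstacle yourself but do not resolve it; the proposal is a plan to find a construction, not a construction.

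The paper sidesteps this tension by abandoning the ``one sample identifies the index'' paradigm entirely. The parameter is not a discrete $t$ but a high-dimensional $p\in[0,1]^d$, and each $\cD_{\alpha/2,d,p}$ is a mixture: with weight $\alpha/2$ a product Bernoulli with mean $p$ (the ``key''), and with weight $1-\alpha/2$ a distribution on $\{\pm 1,\dots,\pm d\}$ whose $j$-th pair has bias $p_j$. Non-privately, $O(1/\alpha)$ draws yield a handful of key samples, and the empirical mean of those estimates $p$ to $\ell_2$ error $O(\sqrt{d}\,\alpha)$, which is enough because the heavy component contributes $\frac{1-\alpha/2}{d}\|\widehat p - p\|_1$ to the TV error. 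Privately, learning the class to TV error $\alpha$ forces $\ell_1$ estimation of $p$ to error $O(d\alpha)$, hence $\ell_2^2$ error $O(d\alpha)$, and the fingerprinting lower bound for product-distribution mean estimation (Proposition~\ref{prop:fingerprinting}) gives $n=\Omega(\sqrt{d}/(\sqrt{\alpha}\,\eps))$. Taking the union over all $d\ge 2$ and sending $d\to\infty$ yields the separation. The hard primitive is thus fingerprinting for mean estimation, not interior point or Littlestone dimension; and crucially, no single sample reveals $p$, so the stability argument that kills your sketch does not apply.
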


We use a ``trapdoor'' construction, where the class of distributions consists of mixtures over two components.
The components are entangled, in the sense that they share the same set of parameters.
The first component encodes a ``key'' that makes it possible to identify the other component.
The second component is hard to learn individually, even without privacy.
In our setting, the first component will be a binary product distribution over $\brc{0, 1}^d$, whereas the other component will be a distribution over $\brc{\pm 1, \dots, \pm d}$.
However, we stress that $d$ will not be fixed a priori, in the sense that our class will include distributions where $d$ can be any positive integer greater than $1$.\footnote{We focus on the $d \geq 2$ case because, for $d = 1$, the two components will have overlapping supports.}
The construction will be done in a way that the mixing weight will significantly favor the second component, but samples drawn from it will give very little information about the overall distribution.
Eventually, the hardness in the private setting will be a consequence of reducing from lower bounds for private mean estimation of binary product distributions (in the appropriate error metric).
We note that conceptually-similar (but technically quite different) trapdoor constructions have recently been used to show lower bounds for PAC learning~\cite{LechnerB23} and robust learnability~\cite{BenDavidBKL23}.

\paragraph{Related Work.} Gaussians are often the first class studied when considering distribution learning. They have been studied under the constraint of differential privacy starting from the work of Karwa and Vadhan on estimating univariate Gaussians~\cite{KarwaV18}, with subsequent works focusing on understanding the multivariate setting~\cite{KamathLSU19, BunS19, BiswasDKU20, LiuKKO21, AdenAliAK21, CaiWZ21, TsfadiaCKMS22, AshtianiL22, KamathMSSU22, KothariMV22, BieKS22, KamathMS22, AlabiKTVZ23, HopkinsKMN23, AsiUZ23, KamathMRSSU23}, as well as the related problem of binary product distributions~\cite{KamathLSU19, Singhal23}.
The natural generalization to learning mixtures of Gaussians has also been studied~\cite{NissimRS07, KamathSSU19, AdenAliAL21, AshtianiL22, ArbasAL23, AfzaliAL23}.
Some work focuses on estimating structured classes of distributions~\cite{DiakonikolasHS15}.
Other works study broad tools for distribution learning~\cite{BunKSW19, AdenAliAK21, AcharyaSZ21, TsfadiaCKMS22,AshtianiL22}.
See~\cite{KamathU20} for a survey of the area.

\section{Preliminaries}
\label{sec:prelim}

\textbf{General Notation.}
We denote the set of all non-zero integers by $\bZ^*$.
Additionally, given a set $S$, we define $S^i$ to be the \emph{$i$-fold Cartesian product of the set with itself}.
We use the notation $\brk{n} \coloneqq \brc{1, 2, \dots, n}$ and $\brk{a \pm R} \coloneqq \brk{a - R, a + R}$.
Also, for convenience, we will use the notations like $\paren{\bR^d}^n \equiv \bR^{n \times d}$ and $\paren{\brc{0, 1}^d}^n \equiv \brc{0, 1}^{n \times d}$.
We use $\Bern\paren{p}$ to denote a Bernoulli distribution with probability of success $p$.
Furthermore, given any set $S$, we denote the set of all distributions over that set by $\Delta\paren{S}$.
For any distribution $\cD$, $\cD^{\otimes n}$ denotes the \emph{product measure} where each \emph{marginal distribution} is $\cD$.
Thus, if we are given $n$ independent samples from $\cD$, we write $\paren{X_1, \dots, X_n} \sim \cD^{\otimes n}$.
Also, depending on the context, we may use capital Latin characters like $X$ to denote either an individual sample from a distribution or a collection of samples $X \coloneqq \paren{X_1, \dots, X_n}$.
To denote the $j$-th component of a vector, we will use a subscript (e.g., $X_j$, if the vector is $X$).
Given a pair of distributions $\cD_1, \cD_2$ over a space $\cX$, their TV-distance is defined as $\TV\paren{\cD_1, \cD_2} \coloneqq \sup\limits_{A \subseteq \cX} \abs{\cD_1\paren{A} - \cD_2\paren{A}}$.
If $\cD_1$ and $\cD_2$ are discrete, it holds that $\TV\paren{\cD_1, \cD_2} = \frac{1}{2} \sum\limits_{x \in \cX} \abs{\cD_1\paren{x} - \cD_2\paren{x}}$.
We include a few more standard facts in Appendix~\ref{sec:appendix}.

We conclude this section by introducing the definition of differential privacy and its \emph{closure under post-processing} property.

\begin{definition}[Differential Privacy (DP)~\cite{DworkMNS06}]
\label{def:dp}
A mechanism $M \colon \cX^n \to \cY$ is said to satisfy $\paren{\eps, \delta}$-differential privacy ($\paren{\eps, \delta}$-DP) if for every pair of neighboring datasets $X, X' \in \cX^n$ (i.e., datasets that differ in exactly one entry), we have:
\[
    \pr{M}{M\paren{X} \in Y} \le e^{\eps} \pr{M}{M\paren{X'} \in Y} + \delta,~~~ \forall~ Y \subseteq \cY.
\]
When $\delta = 0$, we say that $M$ satisfies $\eps$-differential privacy or pure differential privacy.
\end{definition}

\begin{lemma}[Post Processing~\cite{DworkMNS06}]
\label{lem:post_processing}
If $M \colon \cX^n \rightarrow \cY$ is $\paren{\eps, \delta}$-DP, and $P \colon \cY \to \cZ$ is any randomized function, then the algorithm $P \circ M$ is $\paren{\eps,\delta}$-DP.
\end{lemma}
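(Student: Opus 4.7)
The statement is the standard post-processing lemma, so the plan is a textbook reduction: externalize the internal randomness of the randomized map $P$ and then invoke the $(\eps, \delta)$-DP guarantee of $M$ pointwise against an arbitrary, but now \emph{deterministic}, event.

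Concretely, I would represent $P$ as a deterministic function $p \colon \cY \times \cR \to \cZ$ together with an independent random seed $R \sim \mu$ taking values in a probability space $\cR$. Given any target set $Z \subseteq \cZ$ and any fixed $r \in \cR$, the preimage $Y_r \coloneqq \brc{y \in \cY : p(y, r) \in Z}$ is a subset of $\cY$, and the probability of the composed mechanism landing in $Z$ decomposes as
\[
    \pr{}{(P \circ M)(X) \in Z} = \int_{\cR} \pr{M}{M(X) \in Y_r}\, d\mu(r).
\]
The $(\eps, \delta)$-DP guarantee of $M$ applies to every such $Y_r$, since it is simply an arbitrary subset of $\cY$, and gives $\pr{M}{M(X) \in Y_r} \le e^{\eps}\, \pr{M}{M(X') \in Y_r} + \delta$ for any pair of neighboring datasets $X, X' \in \cX^n$. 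Integrating this pointwise inequality over $r$ with respect to $\mu$ and recognizing the right-hand side as the analogous decomposition applied to $X'$ yields
\[
    \pr{}{(P \circ M)(X) \in Z} \le e^{\eps}\, \pr{}{(P \circ M)(X') \in Z} + \delta,
\]
which is exactly the $(\eps, \delta)$-DP guarantee for $P \circ M$.

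There is essentially no substantive obstacle. The only minor care needed is to ensure that $r \mapsto Y_r$ is a jointly measurable family so that Fubini's theorem legitimately interchanges the randomness of $R$ and $M$ in the integral representation; this is automatic when $P$ is modeled as a Markov kernel between standard Borel spaces and is trivial in the fully discrete settings encountered later in the paper.
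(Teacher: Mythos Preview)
The paper does not actually prove this lemma; it is stated as a standard result with a citation to~\cite{DworkMNS06} and no proof is given. Your argument is the standard textbook proof and is correct, so there is nothing to compare.
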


\section{The Construction and Proofs}
\label{sec:main}

We define the class of distributions $\cH_{w, d} \coloneqq \brc{\cD_{w, d, p} \colon p \in \brk{0, 1}^d} \subseteq \Delta\paren{\brc{0, 1}^d \cup \brc{\pm 1, \dots, \pm d}}$, where each $\cD_{w, d, p}$ has pmf $q_{w, d, p}$ with:
\begin{equation}
    q_{w, d, p}\paren{x} \coloneqq
    \begin{cases} 
        w \prod\limits_{j \in \brk{d}} p_j^{x_j} \paren{1 - p_j}^{1 - x_j},                          & \forall x \in \brc{0, 1}^d \\
        \frac{1 - w}{d} p_1^{\frac{1 + x}{2}} \paren{1 - p_1}^{\frac{1 - x}{2}},                     & \forall x \in \brc{\pm 1} \\
        \frac{1 - w}{d} p_2^{\frac{1 + \frac{x}{2}}{2}} \paren{1 - p_2}^{\frac{1 - \frac{x}{2}}{2}}, & \forall x \in \brc{\pm 2} \\
                                                                                                     & \vdots \\
        \frac{1 - w}{d} p_d^{\frac{1 + \frac{x}{d}}{2}} \paren{1 - p_d}^{\frac{1 - \frac{x}{d}}{2}}, & \forall x \in \brc{\pm d}
    \end{cases}. \label{eq:construction}
\end{equation}
Simply put, each $\cD_{w, d, p}$ is a mixture of $d + 1$ components.
The first component has mixing weight $w$ and is a binary product distribution over $\brc{0, 1}^d$ with probability vector $p$.
Each of the remaining components has mixing weight $\frac{1 - w}{d}$ and is a binary distribution that takes the value $j$ with probability $p_j$ and the value $- j$ with probability $1 - p_j$.
Note, in particular, that the probability vector $p$ is shared for both components of the distribution.
In this context, the first component can be seen as the ``key'' to learning the distribution, because a single sample from it reveals information about the whole parameter vector, in contrast to the last $d$ components which, taken together, play the role of the ``hard distribution'', since a sample from it reveals information about only one component of the parameter vector.

Our goal will be to use $\cH_w \coloneqq \bigcup\limits_{d \geq 2} \cH_{w, d}$ as the class that will lead to the separation.
Specifically, we will show that the sample complexity of privately learning each $\cH_{w, d}$ is dimension-dependent.
As $d$ grows, the sample complexity will approach infinity.
At this point, we note that lower bounds shown for individual classes $\cH_{w, d}$ are also lower bounds for $\cH_w$ which, combined with our previous observation, implies that it is impossible to learn $\cH_w$ with a finite number of samples.

We denote our target error by $\alpha$.
The proof will focus on an instance of $\cH_w$ with $w = \frac{1}{2} \alpha$.
Specifically, focusing on the sub-class $\cH_{\frac{1}{2} \alpha, d}$ for $d \geq 2$, we first show a lower bound of $\Omega\paren{\frac{\sqrt{d}}{\log\paren{\frac{1}{\alpha}} \sqrt{\alpha} \eps}}$ for density estimation up to error $\alpha$ with probability of success $0.9$ for this class under $\paren{\eps, \delta}$-DP (Corollary~\ref{cor:private_lb_const}), and then argue that the non-private sample complexity for the same task is $\cO\paren{\frac{1}{\alpha^3}}$ (Lemma~\ref{lem:non_priv_ub}).
We conclude by formally establishing the desired separation in Theorem~\ref{thm:main_theorem_formal}.
Throughout the whole section, we will work with algorithms whose input comes from the set $\paren{\bigcup\limits_{d \geq 2} \brc{0, 1}^d} \cup \bZ^*$, while their output range is the set of all distributions over the previous domain.

We start by showing the lower bound under privacy.
Doing so involves an argument which establishes a reduction from parameter estimation for binary product distributions to density estimation for the class $\cH_{\frac{1}{2} \alpha, d}$.
Formulating the reduction first necessitates showing how a mechanism that performs density estimation for the class $\cH_{\frac{1}{2} \alpha, d}$ can be used to construct a mechanism that estimates the parameter $p$ that is associated with each distribution in this class.

\begin{lemma}
\label{lem:reduction1}
Let $d \geq 2, p \in \brk{0, 1}^d$, and $X \sim \cD_{\frac{1}{2} \alpha, d, p}^{\otimes n}$.
For $\eps, \delta \geq 0$, if $M$ is an $\paren{\eps, \delta}$-DP mechanism that takes $X$ as input and outputs a $\widehat{\cD}$ such that $\ex{X, M}{\TV\paren{\widehat{\cD}, \cD_{\frac{1}{2} \alpha, d, p}}} \le \alpha \le 1$, then it is possible to output a $\widehat{p} \in \brk{0, 1}^d$ such that $\ex{X, M}{\mnorm{1}{\widehat{p} - p}} \le 2 d \alpha$, while preserving $\paren{\eps, \delta}$-DP.
\end{lemma}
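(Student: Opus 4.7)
The strategy is to extract $\widehat{p}$ from $\widehat{\cD}$ by post-processing, and then convert a TV bound into an $\ell_1$ bound on parameter vectors by restricting attention to the ``hard'' component of the mixture.

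\textbf{Step 1 (extraction).} Since $M$ is guaranteed to output some $\widehat{\cD} \in \cH_{\frac{\alpha}{2}, d}$, there exists a (unique) parameter $\widehat{p} \in \brk{0, 1}^d$ such that $\widehat{\cD} = \cD_{\frac{\alpha}{2}, d, \widehat{p}}$. I would define the post-processing map $P \colon \cH_{\frac{\alpha}{2}, d} \to \brk{0, 1}^d$ sending $\cD_{\frac{\alpha}{2}, d, \widehat{p}} \mapsto \widehat{p}$, and return $\widehat{p} \coloneqq P\paren{M\paren{X}}$. By \Cref{lem:post_processing}, this composed mechanism is $\paren{\eps, \delta}$-DP.

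\textbf{Step 2 (TV vs.\ $\ell_1$ comparison).} The key observation is that, for $d \geq 2$, the supports $\brc{0, 1}^d$ and $\brc{\pm 1, \dots, \pm d}$ are disjoint, so TV distance decomposes as a sum over the two pieces. Focusing only on the integer-valued component, for each $j \in \brk{d}$ we have $q_{\frac{\alpha}{2}, d, p}\paren{\pm j} = \frac{1 - \alpha/2}{d} \cdot p_j$ (respectively $\frac{1 - \alpha/2}{d}\paren{1 - p_j}$), so
\[
    d_{\tv}\paren{\cD_{\frac{\alpha}{2}, d, p}, \cD_{\frac{\alpha}{2}, d, \widehat{p}}} \;\geq\; \frac{1}{2}\sum_{j=1}^d \frac{1 - \alpha/2}{d}\paren{\abs{p_j - \widehat{p}_j} + \abs{\paren{1 - p_j} - \paren{1 - \widehat{p}_j}}} \;=\; \frac{1 - \alpha/2}{d}\left\|\widehat{p} - p\right\|_1.
\]
Since $\alpha \le 1$, the prefactor is at least $\frac{1}{2d}$, giving $\left\|\widehat{p} - p\right\|_1 \le 2d \cdot d_{\tv}\paren{\cD_{\frac{\alpha}{2}, d, p}, \cD_{\frac{\alpha}{2}, d, \widehat{p}}}$.

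\textbf{Step 3 (conclusion).} Taking expectation over $X$ and $M$ and applying the hypothesis $\ex{X, M}{d_{\tv}\paren{\widehat{\cD}, \cD_{\frac{\alpha}{2}, d, p}}} \le \alpha$ yields $\ex{X, M}{\left\|\widehat{p} - p\right\|_1} \le 2 d \alpha$, as required.

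\textbf{Main obstacle.} There is no real obstacle here---the lemma is essentially a definitional unpacking. The only thing to be careful about is verifying that the two pieces of the support do not overlap (which uses $d \geq 2$, matching the footnote in the paper) so that the TV sum cleanly separates, and that the extraction map $P$ is well-defined on the codomain of $M$; the former is a set-theoretic check and the latter is immediate from the parameterization of $\cH_{\frac{\alpha}{2}, d}$ by $p$.
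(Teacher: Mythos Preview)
Your proposal is correct and follows essentially the same argument as the paper: both use the disjointness of the two supports to isolate the integer-valued component of the mixture, obtain $d_{\tv}\paren{\cD_{\frac{\alpha}{2}, d, p}, \cD_{\frac{\alpha}{2}, d, \widehat{p}}} \geq \frac{1 - \alpha/2}{d}\left\|\widehat{p} - p\right\|_1$, and then invoke post-processing (Lemma~\ref{lem:post_processing}) for the privacy guarantee. The paper additionally records the exact TV decomposition (including the $\frac{\alpha}{2}$-weighted product-distribution term) as equation~\eqref{eq:total_tv1}, which it reuses later, but for this lemma only the lower bound you wrote is needed.
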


\begin{proof}
As a starting point, we note that we can extract the value $p_j$ as a function of the mass assigned to the points $\pm j$ by $\cD_{\frac{1}{2} \alpha, d, p}$.
Indeed, we have by (\ref{eq:construction}):
\begin{equation}
    q_{\frac{1}{2} \alpha, d, p}\paren{j} - q_{\frac{1}{2} \alpha, d, p}\paren{- j} = \frac{1 - \frac{1}{2} \alpha}{d} \paren{2 p_j - 1} \iff p_j = \frac{1}{2} + \frac{1}{2} \cdot \frac{d}{1 - \frac{1}{2} \alpha} \paren{q_{\frac{1}{2} \alpha, d, p}\paren{j} - q_{\frac{1}{2} \alpha, d, p}\paren{- j}}. \label{eq:p_of_q}
\end{equation}
Assume now that we have a distribution $\widehat{\cD}$ that has been outputted by $M$.
Our goal is to construct a vector $\widehat{p}$ from $\widehat{\cD}$.
To do that, we set:
\begin{equation}
    \widehat{p}_j \coloneqq \clip_{\brk{0, 1}}\paren{\frac{1}{2} + \frac{1}{2} \cdot \frac{d}{1 - \frac{1}{2} \alpha} \paren{\widehat{\cD}\paren{j} - \widehat{\cD}\paren{- j}}}, \forall j \in \brk{d}, \label{eq:p_hat}
\end{equation}
where $\clip_{\brk{0, 1}}\paren{\cdot}$ denotes the operation of clipping a number to the interval $\brk{0, 1}$ by rounding it the closest endpoint of the interval, if the number happens to fall outside it.
This operation is necessary to account for the possibility that $M$ may be \emph{improper}.

Observe now that, given a point $x \in \bR$ and a point $y \in \brk{0, 1}$, the clipping operation $\clip_{\brk{0, 1}}\paren{x}$ can never increase the distance from $y$, i.e., $\abs{\clip_{\brk{0, 1}}\paren{x} - y} \le \abs{x - y}$.
Combining this observation with (\ref{eq:p_of_q}) and (\ref{eq:p_hat}), we have:
\begin{align*}
    \abs{\widehat{p}_j - p_j}
    &\le \frac{1}{2} \cdot \frac{d}{1 - \frac{1}{2} \alpha} \abs{\paren{\widehat{\cD}\paren{j} - q_{\frac{1}{2} \alpha, d, p}\paren{j}} - \paren{\widehat{\cD}\paren{- j} - q_{\frac{1}{2} \alpha, d, p}\paren{- j}}} \\
    &\le d \paren{\abs{\widehat{\cD}\paren{j} - q_{\frac{1}{2} \alpha, d, p}\paren{j}} + \abs{\widehat{\cD}\paren{- j} - q_{\frac{1}{2} \alpha, d, p}\paren{- j}}},
\end{align*}
where we appealed to the triangle inequality and the fact that $\alpha \le 1$.

Using this, we obtain:
\[
    \mnorm{1}{\widehat{p} - p} \le d \sum\limits_{j \in \brk{d}} \paren{\abs{\widehat{\cD}\paren{j} - q_{\frac{1}{2} \alpha, d, p}\paren{j}} + \abs{\widehat{\cD}\paren{- j} - q_{\frac{1}{2} \alpha, d, p}\paren{- j}}} \le 2 d \cdot \TV\paren{\widehat{\cD}, \cD_{\frac{1}{2} \alpha, d, p}}.
\]
Taking expectation over $X$ and $M$ on the previous yields the desired accuracy guarantee, while the privacy guarantee follows from closure under post-processing (Lemma~\ref{lem:post_processing}).
\end{proof}

\begin{remark}
\label{rem:improper}
Over the course of the proof of Lemma~\ref{lem:reduction1} we highlighted the fact that our argument accounts for improper learners (rather than just proper learners).
This is an important point, since it implies that all subsequent results that build on this lemma apply to improper learners too.
\end{remark}

To complete the reduction, we need to show how, given an instance of $\ell_1$-parameter estimation for binary product distributions, it is possible to construct an instance of density estimation for the class $\cH_{\frac{1}{2} \alpha, d}$, and use a mechanism for learning $\cH_{\frac{1}{2} \alpha, d}$ as a black-box to solve the original problem instance.
This is done in the following lemma:

\begin{lemma}
\label{lem:reduction2}
For $d \geq 2$, let $P$ be a binary product distribution over $\brc{0, 1}^d$ with mean vector $p \in \brk{0, 1}^d$, and let $X \sim P^{\otimes n}$.
For $\eps, \delta \geq 0$, if any $\paren{\eps, \delta}$-DP mechanism $T \colon \brc{0, 1}^{n \times d} \to \brk{0, 1}^d$ with $\ex{X, T}{\mnorm{1}{T\paren{X} - p}} \le 2 d \alpha$ for all $p$ requires at least $n_0$ samples, the same sample complexity lower bound holds for any $\paren{\eps, \delta}$-DP mechanism $M$ that, for any $\cD_{\frac{1}{2} \alpha, d, p}$, takes $Y \sim \cD_{\frac{1}{2} \alpha, d, p}^{\otimes n}$ as input and outputs a $\widehat{\cD}$ with $\ex{Y, M}{\TV\paren{\widehat{\cD}, \cD_{\frac{1}{2} \alpha, d, p}}} \le \alpha \le 1$.
\end{lemma}

\begin{proof}
To establish our result, it suffices to show that estimating the parameter vector of $P$ can be transformed into an instance of density estimation for distributions in $\cH_{\frac{1}{2} \alpha, d}$, implying that lower bounds for the former problem also apply to the latter.
To do so, we assume we have an $\paren{\eps, \delta}$-DP mechanism $M$ that takes $Y \sim \cD_{\frac{1}{2} \alpha, d, p}^{\otimes n}$, and outputs $\widehat{\cD}$ with $\ex{Y, M}{\TV\paren{\widehat{\cD}, \cD_{\frac{1}{2} \alpha, d, p}}} \le \alpha \le 1$.
We will show how to use this mechanism to construct an $\paren{\eps, \delta}$-DP mechanism $T \colon \brc{0, 1}^{n \times d} \to \brk{0, 1}^d$ with $\ex{X, T}{\mnorm{1}{T\paren{X} - p}} \le 2 d \alpha$ for $X \sim P^{\otimes n}$.

The crux of the argument involves proving that, given a dataset $X \sim P^{\otimes n}$, it is possible to generate a dataset $Y \sim \cD_{\frac{1}{2} \alpha, d, p}^{\otimes n}$.
The mechanism $T$ will consist of this sampling step (pre-processing), and an application of $M$ over the resulting dataset.
Appealing to Lemma~\ref{lem:reduction1} suffices to establish that $T$ will have the desired accuracy guarantee, so the rest of the proof is devoted to describing the sampling process.

Given any datapoint $X_i$, we set $Y_i$ equal to it with probability $\frac{1}{2} \alpha$, or, with probability $1 - \frac{1}{2} \alpha$, we choose one of the coordinates of $X_i$ uniformly at random (say the $j$-th coordinate).
If the $j$-th coordinate of $X_i$ is equal to $1$, we set $Y_i = j$.
Otherwise, we set $Y_i=- j$.
The resulting dataset $Y$ will follow the desired distribution.
We stress that this process preserves privacy guarantees, because changing a point of $X$ can result in at most one point of $Y$ changing (conditioned on the randomness involved in the conversion of $X$ to $Y$).
\end{proof}

At this point, we recall the following result from~\cite{KamathLSU19}:

\begin{proposition}
\label{prop:fingerprinting}[Lemma~$6.2$ from~\cite{KamathLSU19}]
Let $p$ be any vector in $\brk{\frac{1}{3}, \frac{2}{3}}^d$, and let $X \coloneqq \paren{X_1, \dots, X_n}$ be a dataset consisting of $n$ independent samples from a binary product distribution $P$ over $\brc{0, 1}^d$ with mean $p$.
If $M \colon \brc{0, 1}^{n \times d} \to \brk{\frac{1}{3}, \frac{2}{3}}^d$ is an $\paren{\eps, \delta}$-DP mechanism with $\eps \in \brk{0, 1}$ and $\delta = \cO\paren{\frac{1}{n}}$ that satisfies $\ex{X, M}{\llnorm{M\paren{X} - p}^2} \le \alpha^2 \le \cO\paren{d}, \forall p \in \brk{\frac{1}{3}, \frac{2}{3}}^d$, it must hold that $n \geq \Omega\paren{\frac{d}{\alpha \eps}}$.
\end{proposition}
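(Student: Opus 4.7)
\medskip

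Since Proposition~\ref{prop:fingerprinting} is quoted as Lemma~6.2 of~\cite{KamathLSU19}, the cleanest approach is simply to invoke their result. For a self-contained derivation, I would run a standard fingerprinting-style argument in the spirit of Bun-Ullman-Vadhan and Steinke-Ullman, specialized to the product Bernoulli setting.

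The plan is as follows. First, impose a prior on $p$: draw each coordinate $p_j$ independently from a distribution supported in $\brk{\frac{1}{3}, \frac{2}{3}}$ (for instance, a suitably scaled-and-shifted Beta distribution, chosen so the coordinate-wise fingerprinting identity is tight). Second, introduce the multi-coordinate correlation statistic
\[
    S(X) \coloneqq \sum_{j=1}^{d} (M(X)_j - p_j) \sum_{i=1}^{n} (X_{ij} - p_j) \;=\; n \, \iprod{M(X) - p,\; \bar X - p},
\]
and bound $\ex{}{S(X)}$ from both sides. The accuracy-side lower bound comes from the Bun-Ullman-Vadhan/Steinke-Ullman fingerprinting identity applied coordinate-wise: this yields $\ex{}{S(X)} \geq c_1 d - c_2 \ex{}{\llnorm{M(X) - p}^2}$ for absolute constants $c_1, c_2 > 0$, and substituting the hypothesis $\ex{}{\llnorm{M(X) - p}^2} \le \alpha^2 \le \cO(d)$ with the implicit constant small enough gives $\ex{}{S(X)} = \Omega(d)$.

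Third, obtain the matching upper bound on $\ex{}{S(X)}$ from privacy. Writing $S(X) = \sum_{i=1}^n \iprod{M(X) - p, X_i - p}$ and replacing $X_i$ with a fresh independent draw $X_i' \sim \bigotimes_j \Be(p_j)$, the corresponding ``resampled'' expectation $\ex{}{\iprod{M(X^{(i)}) - p, X_i' - p}}$ vanishes, since $X_i'$ is conditionally independent of $M(X^{(i)})$ given $p$. Combining $(\eps,\delta)$-DP (and the hypothesis $\delta = \cO(1/n)$ to absorb the $\delta$ contribution) with a careful sensitivity analysis that leverages the accuracy guarantee to bound the typical per-sample contribution by $\cO(\alpha)$ rather than the naive $\cO(1)$ then yields $\ex{}{S(X)} \le \cO(n\alpha\eps)$. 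Putting the two bounds together produces $d \lesssim n\alpha\eps$, i.e., the desired $n \ge \Omega\paren{d/(\alpha\eps)}$.

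The main obstacle is the privacy-side upper bound. A worst-case sensitivity calculation on $S(X)$ is far too lossy: treating each $\iprod{M(X) - p, X_i - p}$ summand as $\cO(d)$ in magnitude only produces $n \ge \Omega(1/\eps)$, and even a Cauchy-Schwarz improvement that uses $\ex{}{\llnorm{M(X) - p}} \le \alpha$ gives only $\Omega\paren{\sqrt d/(\alpha\eps)}$. Recovering the full $\Omega\paren{d/(\alpha\eps)}$ factor requires the more delicate analysis of~\cite{KamathLSU19}, which conditions on a ``typical accuracy'' event (where $\llnorm{M(X) - p}$ is at most a constant multiple of $\alpha$) and carefully truncates the statistic so that the effective per-coordinate sensitivity of the fingerprinting score scales as $\alpha/\sqrt d$ rather than $1$; this is the technical heart of the argument.
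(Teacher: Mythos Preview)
The paper does not prove Proposition~\ref{prop:fingerprinting} at all; it is stated with an explicit citation to Lemma~6.2 of~\cite{KamathLSU19} and then used as a black box in the proof of Lemma~\ref{lem:private_lb_exp}. Your opening sentence---that the cleanest approach is simply to invoke the cited result---is therefore precisely what the paper does, and nothing more is required here.

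Your additional self-contained sketch is a reasonable outline of the fingerprinting argument and correctly identifies where the real work lies (getting the per-sample sensitivity down to $\cO(\alpha)$ rather than $\cO(\sqrt d)$ or $\cO(1)$), but since you ultimately defer that step back to~\cite{KamathLSU19} anyway, the sketch is extra exposition rather than an independent proof. For the purposes of matching the paper, you can stop after the first sentence.
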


While phrased in terms of mechanisms with mean-squared-error guarantees, the above result also implies a bound for $\ell_1$-estimation.
The connection is described in the following lemma:

\begin{lemma}
\label{lem:private_lb_exp}
For $d \geq 2$, a sufficiently small absolute constant $\newconstant{c}$ $\useconstant{c} \in \paren{0, 1}$, and any $\alpha \le \useconstant{c}$, consider the class of distributions $\cH_{\frac{1}{2} \alpha, d}$.
Let $p \in \brk{\frac{1}{3}, \frac{2}{3}}^d$, and $X \sim \cD_{\frac{1}{2} \alpha, d, p}^{\otimes n}$.
If $M$ is an $\paren{\eps, \delta}$-DP mechanism with $\eps \in \brk{0, 1}$ and $\delta = \cO\paren{\frac{1}{n}}$ that takes $X$ as input and outputs a $\widehat{\cD}$ such that $\ex{X, M}{\TV\paren{\widehat{\cD}, \cD_{\frac{1}{2} \alpha, d, p}}} \le \alpha, \forall p \in \brk{\frac{1}{3}, \frac{2}{3}}^d$, it must hold that $n \geq \Omega\paren{\frac{\sqrt{d}}{\sqrt{\alpha} \eps}}$.
\end{lemma}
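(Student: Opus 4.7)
The plan is to chain together the two reductions already established with the fingerprinting lower bound of Proposition~\ref{prop:fingerprinting}. Starting from the hypothesized $(\eps,\delta)$-DP density estimator $M$, Lemmas~\ref{lem:reduction1} and~\ref{lem:reduction2} together yield an $(\eps,\delta)$-DP mechanism $T \colon \brc{0,1}^{n \times d} \to \brk{0,1}^d$ that, on input $X \sim P^{\otimes n}$ for the binary product distribution $P$ with mean $p \in \brk{1/3, 2/3}^d$, satisfies $\ex{X, T}{\left\|T(X) - p\right\|_1} \le 2 d \alpha$. The remaining task is to pass from this $\ell_1$-estimation guarantee to the $\ell_2^2$-estimation guarantee that Proposition~\ref{prop:fingerprinting} speaks about, and then to extract the stated sample-complexity bound.

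The conversion between the two error notions is done via a truncation step. I post-process $T$ by clipping each coordinate of its output to $\brk{1/3, 2/3}$; this preserves $(\eps,\delta)$-DP by Lemma~\ref{lem:post_processing}, and because $p \in \brk{1/3, 2/3}^d$ the coordinate-wise projection is $1$-Lipschitz toward $p$, so the $\ell_1$ error can only decrease. Call the resulting mechanism $T'$. Since every coordinate of $T'(X) - p$ lies in $\brk{-1, 1}$, the elementary inequality $|v_i|^2 \le |v_i|$ gives $\|T'(X) - p\|_2^2 \le \|T'(X) - p\|_1$, hence
\[
\ex{X, T'}{\|T'(X) - p\|_2^2} \;\le\; \ex{X, T'}{\|T'(X) - p\|_1} \;\le\; 2 d \alpha.
\]

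To conclude, I apply Proposition~\ref{prop:fingerprinting} with target squared-error $(\alpha')^2 := 2 d \alpha$. Choosing the constant $\useconstant{c}$ small enough ensures that $2 d \alpha \le \cO(d)$, so the hypothesis of the proposition is satisfied, and it applies to $T'$ uniformly over $p \in \brk{1/3, 2/3}^d$. The conclusion yields
\[
n \;\geq\; \Omega\paren{\frac{d}{\alpha' \eps}} \;=\; \Omega\paren{\frac{d}{\sqrt{2 d \alpha}\, \eps}} \;=\; \Omega\paren{\frac{\sqrt{d}}{\sqrt{\alpha}\, \eps}},
\]
as claimed. The main thing to be careful about is the bookkeeping around the truncation step (checking that clipping can only decrease the $\ell_1$ error, which is exactly where $p \in \brk{1/3, 2/3}^d$ is used) and verifying that rescaling the error target by a factor of $d$ still leaves us in the regime $(\alpha')^2 \le \cO(d)$ where Proposition~\ref{prop:fingerprinting} is applicable; both are handled by taking $\alpha$ below a sufficiently small absolute constant.
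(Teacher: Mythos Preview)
Your proposal is correct and follows essentially the same route as the paper: both arguments pass from the $\ell_1$ guarantee to an $\ell_2^2$ guarantee via the elementary bound $\llnorm{v}^2 \le \left\|v\right\|_{\infty}\left\|v\right\|_1 \le \left\|v\right\|_1$ (valid because each coordinate of the error vector has magnitude at most $1$), then invoke Proposition~\ref{prop:fingerprinting} with target error $\sqrt{2d\alpha}$ and finish via Lemma~\ref{lem:reduction2}. The one place you are slightly more careful than the paper is in explicitly clipping the output of $T$ to $\brk{\frac{1}{3},\frac{2}{3}}^d$ so as to match the stated range hypothesis of Proposition~\ref{prop:fingerprinting}; the paper leaves this implicit, but your extra step is harmless and arguably cleaner.
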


\begin{proof}
We recall the inequality $\llnorm{x}^2 \le \mnorm{\infty}{x} \mnorm{1}{x}, \forall x \in \bR^d$.
This is a consequence of \emph{H\"older's inequality}, but can also be shown in an elementary way by remarking that:
\[
    \llnorm{x}^2 = \sum\limits_{i \in \brk{d}} x_i^2 \le \max\limits_{i \in \brk{d}} \brc{\abs{x_i}} \sum\limits_{i \in \brk{d}} \abs{x_i} = \mnorm{\infty}{x} \mnorm{1}{x}.
\]
Now, let $X$ be a dataset of size $n$ that has been drawn i.i.d.\ from a binary product distribution $P$ with mean vector $p$, and let $T \colon \brc{0, 1}^{n \times d} \to \brk{0, 1}^d$ be an $\paren{\eps, \delta}$-DP mechanism with $\eps \in \brk{0, 1}, \delta = \cO\paren{\frac{1}{n}}$ that satisfies $\ex{X, T}{\mnorm{1}{T\paren{X} - p}} \le 2 d \alpha$.
We have $\mnorm{\infty}{T\paren{X} - p} \le 1$ which, by an application of the above inequality, yields $\llnorm{T\paren{X} - p}^2 \le \mnorm{1}{T\paren{X} - p}$.
This implies that $T$ satisfies the guarantee $\ex{X, T}{\llnorm{T\paren{X} - p}^2} \le 2 d \alpha$.
Additionally, if we were to project the output of $T$ to the set $\brk{\frac{1}{3}, \frac{2}{3}}^d$, this would affect neither the accuracy guarantee (since $p \in \brk{\frac{1}{3}, \frac{2}{3}}^d$), nor the privacy guarantee (due to closure under post-processing -- Lemma~\ref{lem:post_processing}).
Consequently, the lower bound of Proposition~\ref{prop:fingerprinting} applies to $T$ if we set $\alpha \to \sqrt{2 d \alpha}$.
Then, appealing to Lemma~\ref{lem:reduction2} completes the proof.
\end{proof}

The lower bound of Lemma~\ref{lem:private_lb_exp} also holds for mechanisms that achieve the accuracy guarantee $\pr{X, M}{\TV\paren{\widehat{\cD}, \cD_{\frac{1}{2} \alpha, d, p}} \le \alpha} \geq 0.9$, albeit at the cost of getting a result that's weaker by a log-factor.
The argument is sketched in the proof of Theorem~$6.1$ of~\cite{KamathLSU19}, so we point readers there and do not repeat it here.
The resulting sample complexity bound is $n \geq \Omega\paren{\frac{\sqrt{d}}{\log\paren{\frac{1}{\alpha}} \sqrt{\alpha} \eps}}$.

We summarize the above remarks in the following corollary.

\begin{corollary}
\label{cor:private_lb_const}
For $d \geq 2$, a sufficiently small absolute constant $\useconstant{c} \in \paren{0, 1}$, and any $\alpha \le \useconstant{c}$, consider the class of distributions $\cH_{\frac{1}{2} \alpha, d}$.
Let $p \in \brk{\frac{1}{3}, \frac{2}{3}}^d$, and let $X \sim \cD_{\frac{1}{2} \alpha, d, p}^{\otimes n}$.
If $M$ is an $\paren{\eps, \delta}$-DP mechanism with $\eps \in \brk{0, 1}$ and $\delta = \cO\paren{\frac{1}{n}}$ that takes $X$ as input and outputs a $\widehat{\cD}$ such that $\pr{X, M}{\TV\paren{\widehat{\cD}, \cD_{\frac{1}{2} \alpha, d, p}} \le \alpha} \geq 0.9, \forall p \in \brk{\frac{1}{3}, \frac{2}{3}}^d$, it must hold that $n \geq \Omega\paren{\frac{\sqrt{d}}{\log\paren{\frac{1}{\alpha}} \sqrt{\alpha} \eps}}$.
\end{corollary}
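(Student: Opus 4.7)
The plan is to reduce the high-probability guarantee to the expectation guarantee of Lemma~\ref{lem:private_lb_exp} via a standard confidence-boosting argument, which loses exactly a $\log(1/\alpha)$ factor in the sample complexity. Concretely, assume we are given an $(\eps, \delta)$-DP mechanism $M$ that takes $n$ samples and satisfies $\Pr_{X, M}[d_{\tv}(\widehat \cD, \cD_{\alpha/2, d, p}) \le \alpha] \ge 0.9$ for every $p \in [1/3, 2/3]^d$. The goal is to manufacture from $M$ a new $(\eps, \delta)$-DP mechanism $M'$ whose \emph{expected} TV error is $O(\alpha)$, using only a $\log(1/\alpha)$ blow-up in sample size; plugging this $M'$ into Lemma~\ref{lem:private_lb_exp} then yields
\[
    n \cdot \log(1/\alpha) \;\ge\; \Omega\paren{\tfrac{\sqrt{d}}{\sqrt{\alpha}\,\eps}},
\]
which rearranges to the desired bound.

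To construct $M'$, I would take a dataset of size $n k$ with $k = \Theta(\log(1/\alpha))$, partition it into $k$ disjoint chunks of size $n$, and apply $M$ independently on each chunk to obtain candidate distributions $\widehat \cD_1, \dots, \widehat \cD_k \in \cH_{\alpha/2, d}$. Since the chunks are disjoint, a change in one user's record affects only a single run of $M$, so the composed algorithm remains $(\eps, \delta)$-DP without any group-privacy or composition cost. By a Chernoff bound, with probability $1 - e^{-\Omega(k)} = 1 - O(\alpha)$ at least (say) a $2/3$-fraction of the $\widehat \cD_i$ are $\alpha$-close to the target in TV distance. A tournament-style aggregator (e.g., the Scheffé/Yatracos estimator applied to the $\widehat \cD_i$, or equivalently a robust coordinate-wise aggregation on the underlying parameter vectors $\widehat p_i$, which works because the distributions in $\cH_{\alpha/2, d}$ are in bijection with $p \in [0,1]^d$) then selects a single $\widehat \cD \in \cH_{\alpha/2, d}$ whose TV distance to $\cD_{\alpha/2, d, p}$ is $O(\alpha)$ with the same probability $1 - O(\alpha)$. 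Because the aggregator is a deterministic function of $(\widehat \cD_1, \dots, \widehat \cD_k)$, it is a valid post-processing step and preserves $(\eps, \delta)$-DP by Lemma~\ref{lem:post_processing}. Bounding the contribution of the bad event by $1 \cdot O(\alpha)$, the expected TV error of $M'$ is $O(\alpha)$.

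The main delicate step is the aggregator: it must be \emph{data-free}, since we have no privacy budget left to spend on fresh comparisons against the true distribution. The reason this works in our setting is that the class $\cH_{\alpha/2, d}$ is cleanly parametrized by $p \in [0,1]^d$, and by Equation~\eqref{eq:total_tv1} TV distance within the class is, up to a factor of $\frac{1-\alpha/2}{d}$, just the $\ell_1$ distance between parameter vectors — so a coordinate-wise median of the recovered $\widehat p_i$'s gives an $O(\alpha)$-accurate output whenever a majority of the $\widehat p_i$'s are $O(\alpha)$-accurate, and this aggregation is entirely a post-processing of the outputs of the $k$ private runs. Applying Lemma~\ref{lem:private_lb_exp} to $M'$, whose sample size is $nk = \Theta(n\log(1/\alpha))$, and solving for $n$ produces the stated bound $n \ge \Omega\!\left(\tfrac{\sqrt{d}}{\log(1/\alpha)\sqrt{\alpha}\,\eps}\right)$, exactly matching the argument from the proof of Theorem~6.1 of~\cite{KamathLSU19} cited in the discussion above.
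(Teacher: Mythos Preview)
Your approach is exactly the confidence-boosting reduction the paper defers to (the discussion preceding the corollary points to Theorem~6.1 of \cite{KamathLSU19}): run $k=\Theta(\log(1/\alpha))$ disjoint copies of $M$ under parallel composition, aggregate, bound the expected TV error by $O(\alpha)$, and invoke Lemma~\ref{lem:private_lb_exp} on the $nk$-sample mechanism. So the strategy matches the paper.

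One small caveat on the aggregator: your claim that a \emph{coordinate-wise median} of the $\widehat p_i$'s is $O(\alpha)$-accurate whenever a majority are is not obviously true for the $\ell_1$ norm. If the good vectors each concentrate their $\ell_1$ budget on different coordinates and the bad vectors are large everywhere, the coordinate-wise median can pick, in each coordinate, the largest good value, yielding $\ell_1$ error scaling with $k$ rather than $O(1)$ times the per-vector error. The tournament/metric-median alternative you also mention is the safe choice: pick any $\widehat p_i$ that lies within $\ell_1$-distance $4d\alpha$ of a strict majority of the other $\widehat p_j$'s; by the triangle inequality such an $i$ exists and is within $6d\alpha$ of $p$, which feeds back into \eqref{eq:total_tv1} to give TV error $O(\alpha)$. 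With that fix, the argument goes through exactly as you wrote.
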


We now proceed to argue that it is possible to solve non-private proper density estimation with respect to the TV-distance for the class $\cH_{\frac{1}{2} \alpha, d}$ with a sample complexity that is independent of $d$.

\begin{lemma}
\label{lem:non_priv_ub}
For $d \geq 2$, there exists an algorithm $\cA$ which, for any $\alpha, \beta \in \brk{0, 1}$, given a dataset $X \sim \cD_{\frac{1}{2} \alpha, d, p}^{\otimes n}$ of size $n = \cO\paren{\frac{\log\paren{\frac{1}{\beta}}}{\alpha^3}}$, outputs a distribution $\widehat{\cD} \equiv \cD_{\frac{1}{2} \alpha, d, \widehat{p}} \in \cH_{\frac{1}{2} \alpha, d}$ such that:
\[
    \pr{X}{\TV\paren{\cD_{\frac{1}{2} \alpha, d, \widehat{p}}, \cD_{\frac{1}{2} \alpha, d, p}} \le \alpha} \geq 1 - \beta.
\]
\end{lemma}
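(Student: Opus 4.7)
\textbf{Proof proposal for Lemma~\ref{lem:non_priv_ub}.} The plan is to use \emph{only} the samples that land in the ``key'' component $\brc{0, 1}^d$ to estimate the parameter vector $p$ via the empirical mean. Each such key sample is drawn from $\bigotimes_{j \in \brk{d}} \Be\paren{p_j}$, so a single key sample simultaneously yields an unbiased one-bit observation of every $p_j$; because the mixing weight of the key component is $\frac{\alpha}{2}$, an overhead of roughly $\frac{1}{\alpha}$ over the sample size needed to learn a product distribution over $\brc{0, 1}^d$ in $\ell_1$ should suffice. The main technical point is to promote an expectation bound on $\left\|\widehat{p} - p\right\|_1$ to a high-probability bound \emph{without} paying a $\log d$ factor, so that the final sample complexity is dimension-independent.

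Concretely, I would first draw $n = C \log\paren{\frac{1}{\beta}}/\alpha^3$ samples for a sufficiently large absolute constant $C$, and let $m$ denote the number of these that land in $\brc{0, 1}^d$. Since each sample is a key sample independently with probability $\frac{\alpha}{2}$, a multiplicative Chernoff bound gives $m \geq m_0 := C' \log\paren{\frac{1}{\beta}}/\alpha^2$ with probability at least $1 - \beta/2$, provided $C, C'$ are chosen appropriately. Condition on this event, define $\widehat{p}_j := \frac{1}{m} \sum_i y_j^{\paren{i}}$ for each coordinate $j \in \brk{d}$ using the $m$ key samples $y^{\paren{1}}, \dots, y^{\paren{m}}$, and output $\widehat{\cD} := \cD_{\frac{\alpha}{2}, d, \widehat{p}}$.

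Next I would control $\left\|\widehat{p} - p\right\|_1$. Each $\widehat{p}_j$ is the empirical mean of $m$ i.i.d.\ $\Be\paren{p_j}$ variables, so Jensen's inequality yields $\ex{}{\abs{\widehat{p}_j - p_j}} \le \frac{1}{2\sqrt{m}}$, and summing over $j$ gives $\ex{}{\left\|\widehat{p} - p\right\|_1} \le \frac{d}{2\sqrt{m}} \le \frac{d \alpha}{4}$ once $m \geq 4/\alpha^2$. To sharpen this to a tail bound, I would apply McDiarmid's inequality to $f\paren{y^{\paren{1}}, \dots, y^{\paren{m}}} := \left\|\widehat{p} - p\right\|_1$: swapping a single $y^{\paren{i}}$ perturbs every coordinate of $\widehat{p}$ by at most $\frac{1}{m}$, and so shifts $f$ by at most $\frac{d}{m}$. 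McDiarmid then yields $\pr{}{f \geq \ex{}{f} + \frac{d\alpha}{4}} \le \exp\paren{-\frac{m \alpha^2}{8}}$, which is at most $\beta/2$ as soon as $m \geq m_0$ and $C'$ is large enough. Combined with the expectation bound, this gives $\left\|\widehat{p} - p\right\|_1 \le \frac{d \alpha}{2}$ with probability at least $1 - \beta/2$.

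Finally, I would convert this $\ell_1$ bound to the required TV bound by applying Equation~\eqref{eq:total_tv1} from the proof of Lemma~\ref{lem:reduction1}, after bounding the product-distribution TV term trivially by $1$:
\[
    d_{\tv}\paren{\widehat{\cD}, \cD_{\frac{\alpha}{2}, d, p}} \le \frac{\alpha}{2} \cdot 1 + \frac{1 - \frac{\alpha}{2}}{d} \cdot \left\|\widehat{p} - p\right\|_1 \le \frac{\alpha}{2} + \frac{\alpha}{2} = \alpha.
\]
A union bound over the Chernoff and McDiarmid failure events gives overall failure probability at most $\beta$. The main conceptual obstacle is the McDiarmid step: a naïve coordinate-wise Chernoff-plus-union-bound would cost an extra $\log d$ factor in $m$ (and hence in $n$), so it is essential to exploit that the $\ell_1$ deviation of an empirical product-distribution estimator admits dimension-free bounded-differences concentration.
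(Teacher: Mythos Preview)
Your proposal is correct and follows the same overall structure as the paper's proof: decompose the TV distance via Equation~\eqref{eq:total_tv1}, bound the product-distribution term trivially by $1$, take $\widehat{p}$ to be the empirical mean over the key samples, and use a Chernoff/Hoeffding argument to ensure that $n = \cO\paren{\log(1/\beta)/\alpha^3}$ yields $m = \cO\paren{\log(1/\beta)/\alpha^2}$ key samples with probability at least $1-\beta/2$. The one place you diverge is in how you obtain the high-probability bound $\left\|\widehat{p}-p\right\|_1 \le \frac{d\alpha}{2}$: you bound the expectation coordinate-wise and then apply McDiarmid to the $\ell_1$ functional (bounded difference $d/m$ per sample, giving tail $\exp(-m\alpha^2/8)$), whereas the paper passes through $\ell_2$ via $\left\|\widehat{p}-p\right\|_1 \le \sqrt{d}\,\llnorm{\widehat{p}-p}$ and invokes the folklore fact that $\Theta\!\paren{\frac{d+\log(1/\beta)}{\gamma^2}}$ product-distribution samples suffice for $\ell_2$ error $\gamma$, then sets $\gamma = \frac{\sqrt{d}\,\alpha}{2}$. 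Both routes yield the same dimension-free requirement $m = \cO\paren{\log(1/\beta)/\alpha^2}$ and both sidestep the $\log d$ overhead you flagged; your McDiarmid argument is fully self-contained, while the paper's is slightly shorter by outsourcing the concentration step to a black-box $\ell_2$ estimation bound.
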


\begin{proof}
We observe that all the distributions in the class are mixtures with components that have disjoint supports, and that the mixing weights are the same for all distributions.
As a consequence, given a pair $\widehat{p}, p \in \brk{0, 1}^d$, we have the following for the corresponding distributions:
\[   
    \TV\paren{\cD_{\frac{1}{2} \alpha, d, \widehat{p}}, \cD_{\frac{1}{2} \alpha, d, p}} = \frac{1}{2} \alpha \TV\paren{\bigotimes\limits_{j \in \brk{d}} \Bern\paren{\widehat{p}_j}, \bigotimes\limits_{j \in \brk{d}} \Bern\paren{p_j}} + \frac{1 - \frac{1}{2} \alpha}{d} \mnorm{1}{\widehat{p} - p}.
\]
By the above, to attain error $\alpha$ in TV-distance, it suffices to $\paren{1}$ estimate $\bigotimes\limits_{j \in \brk{d}} \Bern\paren{p_j}$ up to error $1$ in TV-distance, and $\paren{2}$ estimate the vector $p$ up to error $\frac{1}{2} d \alpha$ in $\ell_1$-distance.
$\paren{1}$ holds trivially, since all distributions have TV-distance at most $1$ from each other, so we focus on $\paren{2}$.

For $\paren{2}$, it holds that $\mnorm{1}{\widehat{p} - p} \le \sqrt{d} \llnorm{\widehat{p} - p}$ (Fact~\ref{fact:cs_app}), so it suffices to have a $\widehat{p}$ such that $\llnorm{\widehat{p} - p} \le \frac{1}{2} \sqrt{d} \alpha$.
Assume, now, that we are given samples drawn i.i.d.\ from a binary product distribution, and that we want to estimate its parameter vector within $\ell_2$-error $\alpha$ with probability at least $1 - \frac{1}{2} \beta$.
By Fact~\ref{fact:bin_prod_dist_est}, $\Theta\paren{\frac{d + \log\paren{\frac{1}{\beta}}}{\alpha^2}}$ samples are both necessary and sufficient for this task, with the bound being attained by taking the sample mean.
Thus, setting $\alpha \to \frac{1}{2} \sqrt{d} \alpha$ yields $\Theta\paren{\frac{d + \log\paren{\frac{1}{\beta}}}{d \alpha^2}}$, which is dominated by $\cO\paren{\frac{\log\paren{\frac{1}{\beta}}}{\alpha^2}}$.
Consequently, in order to get $\llnorm{\widehat{p} - p} \le \frac{1}{2} \sqrt{d} \alpha$ in our setting, it suffices to have $m = \cO\paren{\frac{\log\paren{\frac{1}{\beta}}}{\alpha^2}}$ samples from the first component (the binary product distribution).
For that reason, assume that, for each datapoint $X_i$ we draw from $\cD_{\frac{1}{2} \alpha, d, p}$, we have an associated random variable $Z_i \sim \Bern\paren{\frac{1}{2} \alpha}$ which becomes $1$ if $X_i$ comes from the first component.
We assume now that we have $n$ samples with $\frac{1}{2} n \alpha \geq m$.
We will show that $n = \cO\paren{\frac{\log\paren{\frac{1}{\beta}}}{\alpha^3}}$ suffices to ensure that the event $\sum\limits_{i \in \brk{n}} Z_i < m$ does not happen, except with probability at most $\frac{1}{2} \beta$.
The Hoeffding bound (Fact~\ref{fact:hoeffding}) implies that:
\[
    \pr{}{\sum\limits_{i \in \brk{n}} Z_i < m} \le \pr{}{\abs{\sum\limits_{i \in \brk{n}} Z_i - \frac{1}{2} n \alpha} \geq \frac{1}{2} n \alpha - m} \le 2 \exp\paren{- \frac{\paren{n \alpha -  2m}^2}{2 n}}.
\]
To ensure that the above is upper-bounded by $\frac{1}{2} \beta$, it suffices to have:
\[
    n \geq \frac{2 \paren{2 \alpha m + \log\paren{\frac{2}{\beta}}}}{\alpha^2} = \cO\paren{\frac{\log\paren{\frac{1}{\beta}}}{\alpha^3}}.
\]
By a union bound, the total probability of failure is upper-bounded by $\beta$, completing the proof.
\end{proof}

We are now ready to establish our main result.

\begin{theorem}
\label{thm:main_theorem_formal}
For a sufficiently small absolute constant $\useconstant{c} \in \paren{0, 1}$, we have:
\begin{enumerate}
    \item There exists an algorithm $\cA$ which, given $n = \cO\paren{1}$ samples drawn i.i.d.\ from any $\cD \in \cH_{\frac{1}{2} \useconstant{c}}$, outputs a $\widehat{\cD} \in \cH_{\frac{1}{2} \useconstant{c}}$ such that $\pr{X}{\TV\paren{\widehat{\cD}, \cD} \le \useconstant{c}} \geq 0.9$.
    \item For finite $n$, there exists no $\paren{\eps, \delta}$-DP mechanism $M$ with $\eps \in \brk{0, 1}, \delta = \cO\paren{\frac{1}{n}}$ such that, for any $\cD \in \cH_{\frac{1}{2} \useconstant{c}}$, given a dataset $X \sim \cD^{\otimes n}$, outputs a $\widehat{\cD}$ that satisfies $\pr{X, M}{\TV\paren{\widehat{\cD}, \cD} \le \useconstant{c}} \geq 0.9$.
\end{enumerate}
\end{theorem}

\begin{proof}
We will establish each part of the theorem statement separately.

Without privacy, assume we have a distribution $\cD \equiv \cD_{\frac{1}{2} \useconstant{c}, d, p} \in \cH_{\frac{1}{2} \useconstant{c}, d}$ with $p \in \brk{0, 1}^d$ as our ground truth.
Without privacy, all the algorithm $\cA$ has to do is work as we described in the proof of Lemma~\ref{lem:non_priv_ub}.
In that lemma, we showed how many samples suffice to obtain an adequate number of samples from the first component (the binary product distribution component).
Thus, the algorithm will look at the samples from that component to identify $d$, and then calculate the corresponding sample mean.
The desired guarantee then is immediate by the guarantees of that lemma.

Under privacy, we will establish our result by contradiction.
Let a distribution $\cD \equiv \cD_{\frac{1}{2} \useconstant{c}, d, p} \in \cH_{\frac{1}{2} \useconstant{c}, d}$ with $p \in \brk{\frac{1}{3}, \frac{2}{3}}^d$ be our ground truth.
Assume that, for some finite $n$, there exists an $\paren{\eps, \delta}$-DP mechanism $M$ with $\eps \in \brk{0, 1}, \delta = \cO\paren{\frac{1}{n}}$ which, given $X \sim \cD^{\otimes n}$, outputs a $\widehat{\cD}$ such that $\pr{X, M}{\TV\paren{\widehat{\cD}, \cD} \le \useconstant{c}} \geq 0.9$ for all such $\cD$.
Then, by Corollary~\ref{cor:private_lb_const}, it must be the case that $n \geq \Omega\paren{\frac{\sqrt{d}}{\eps}}$.
This must hold for every $d \geq 2$, so taking $d \to \infty$ leads to a contradiction.
\end{proof}

\section{Acknowledgments}
\label{sec:ack}

We would like to thank Thomas Steinke for helpful discussions related to the reduction from $\ell_2$-estimation to $\ell_1$-estimation, as well as the reviewers at ALT for their comments which led to the improvement of the paper's presentation.
We would further like to thank the Fields Institute for Research in Mathematical Sciences, and the organizers of the Summer 2022 Workshop on Differential Privacy and Statistical Data Analysis, where Ashtiani presented this conjecture.

MB is supported by NSF CNS-2046425 and a Sloan Research Fellowship.
GK, AM, and VS are supported by a Canada CIFAR AI Chair, an NSERC Discovery Grant, and an unrestricted gift from Google.
AM is further supported by a scholarship from the Onassis Foundation (Scholarship ID: F ZT 053-1/2023-2024).
Finally, VS is supported in part by NSF grant BCS-2218803.

\bibliographystyle{alpha}
\bibliography{biblio.bib}

\appendix
\renewcommand{\thesection}{\Alph{section}}

\section{Standard Facts}
\label{sec:appendix}

In this appendix, we include a small number of standard facts that we use in Section~\ref{sec:main}.
We start with a simple consequence of the \emph{Cauchy-Schwarz inequality}.

\begin{fact}
\label{fact:cs_app}
Let $x \in \bR^d$.
Then, we have $\mnorm{1}{x} \le \sqrt{d} \llnorm{x}$.
\end{fact}

We continue by stating the \emph{Hoeffding inequality} for i.i.d.\ Bernoulli random variables.

\begin{fact}
\label{fact:hoeffding}
Let $\paren{X_1, \dots, X_n} \sim \Bern\paren{p}^{\otimes n}$.
Defining $S_n \coloneqq \sum\limits_{i \in \brk{n}} X_i$, we have the tail bound $\pr{}{\abs{S_n - n p} \geq t} \le 2 \exp\paren{- \frac{2 t^2}{n}}$.
\end{fact}

We conclude with a folklore fact about parameter estimation for binary product distributions.

\begin{fact}
\label{fact:bin_prod_dist_est}
Let $p \in \brk{0, 1}^d$, and let $X \coloneqq \paren{X_1, \dots, X_n} \sim \paren{\Bern\paren{p_1} \times \dots \times \Bern\paren{p_d}}^{\otimes n}$.
For $\alpha \geq 0, \beta \in \brk{0, 1}$, it holds that $n = \Theta\paren{\frac{d + \log\paren{\frac{1}{\beta}}}{\alpha^2}}$ samples are necessary and sufficient for any estimator $\cA$ that takes $X$ as input and outputs a vector $\widehat{p}$ such that $\pr{\cA, X}{\llnorm{\widehat{p} - p} \le \alpha} \geq 1 - \beta$ for any $p \in \brk{0, 1}^d$.
Moreover, this bound is attained by the empirical mean $\widehat{p} \coloneqq \frac{1}{n} \sum\limits_{i \in \brk{n}} X_i$.
\end{fact}

\end{document}